\newtheorem{theorem}{Theorem}
\newtheorem{proposition}[theorem]{Proposition}
\newtheorem{lemma}[theorem]{Lemma}
\DeclareMathOperator{\Tr}{Tr} 
\DeclareMathOperator{\reals}{\mathbb{R}}
\DeclareMathOperator{\comp}{\mathbb{C}}
\newcommand{\norm}[1]{\left\lVert#1\right\rVert}
\newcommand{\quasi}[1]{{\left\vert\kern-0.25ex\left\vert\kern-0.25ex\left\vert #1 
    \right\vert\kern-0.25ex\right\vert\kern-0.25ex\right\vert}}
\begin{document}
\title{(Co-)type and the linear stability of Wigner's symmetry theorem}

\author{Javier Cuesta}
\email{j.cuesta@tum.de}
\affiliation{Department of Mathematics, Technische Universit\"at M\"unchen, 85748 Garching, Germany}
\affiliation{Munich Center for Quantum
Science and Technology (MCQST),  M\"unchen, Germany}
\date{\today}

\vspace{-1cm}\begin{abstract}

We study the relation between the linear stability of almost-symmetries and the geometry of the Banach spaces on which these transformations are defined. We show that any transformation between finite dimensional Banach spaces that preserves transition probabilities up to an additive error admits an approximation by a linear map, and the quality of the approximation depends on the type and cotype constants of the involved spaces. 
\end{abstract}

\maketitle

\tableofcontents

\section{ Introduction}

In the work of N. J. Kalton~\cite{K91,BK00,KP79} we can find novel ideas and methods for the stability of functional equations which depart from the classical methods of Hyers, Ulam and Rassias~\cite{Jung11}. In Ref.~\cite{K91} (see Theorem 2.2) Kalton 
provides a sharp bound on the stability of the additive map in $\reals^{n}$ for the so-called singular case. His proof makes use of probabilistic and geometric methods in Banach space theory. This paper ends with a sketch on how the theory of twisted sums in Banach space theory could be used to obtain the same result.  In this note, we study this last idea and use it to obtain a small improvement in the linear stability of Wigner's theorem~\cite{CW18}.\\

Wigner's celebrated symmetry theorem is not only central for physics, but it also finds an important role in many preservers problems. A \textit{preserver problem} deals with the characterization of maps, primarily on matrix spaces and operator algebras, that preserve certain functional, subset, or an invariant. In particular, in the field of Quantum Information Theory (QIT) it has being shown~\cite{MLN13} that the only mapping $T$ that preserves the $f-$divergences (this includes the von Neumann and relative entropy) is a Wigner symmetry transformation, i.e. of the form $T(x)=UxU^{*}$ where $U$ is either a unitary or antiunitary transformation on $\comp^{d}$. It turns out that most of the proofs of different preservers problems can be reduced to Wigner's theorem. Therefore, it is natural to expect that sharp bounds on the stability of Wigner's theorem could provide good approximations for a wide range of almost-preserving problems. It is worth pointing out that there exists a close relation between geometric functional analysis and many questions in QIT~\cite{ABmB}. This is the point of view that we want to motivate here. \\




Throughout this note, we will be entirely concerned with finite dimensional Banach spaces and the twisted sums generated by almost-linear maps. A map $F:X\to Y$ between Banach spaces will be called \textit{almost-linear} if it satifies the following two conditions: 

\begin{enumerate}[(i)]
 \item $F(\lambda x)=\lambda F(x)$ for all $\lambda\in\reals$ and $x\in X$,
 \item there exist a $\delta>0$ such that for any finite sequence $(x_{i})_{i=1}^{m}\subset X$, $m\in\mathbb{N}$ and $\lambda\in\reals^{m}$, 

\begin{equation}\label{eq:almostl}
\norm{\sum^{m}_{i=1}\lambda_{i}F(x_{i})-F\left(\sum^{m}_{i=1}\lambda_{i}x_{i}\right)}_{Y}\leq \delta \sum^{m}_{i=1}|\lambda_{i}|\norm{x_{i}}_{X}.
\end{equation}

\end{enumerate}

We will show that for every almost-linear map $F$ there exist a linear map $H$ whose distance to $F$ depends additively on $\delta$ and on some geometric invariants of the domain and target space of $F$ (see Theorem~\ref{th:k}). The Banach space numbers used to express the results are the type and cotype constants which we introduce now. Let $\{\gamma_{j}\}_{j=1}^{n}$ be a sequence of independent real Gaussian random variables, i.e. for each Borel subset $B\subset\reals$, each random variable has a distribution 

\begin{equation*}
\mu(\gamma \in B)=\frac{1}{(2\pi)^{1/2}}\int_{B}e^{-\frac{t^{2}}{2}}dt.
\end{equation*}

Let $X$ be a Banach space with norm $\norm{\cdot}$ and let $p\in[1,2]$, $q\in[2,\infty)$. For every positive interger $n$ we define $T_{p,n}(X),C_{q,n}(X)$ to be the smallest constants such that for arbitrary sequences $\{x_{j}\}_{j=1}^{n}\subset X$, we have

\begin{align*}
&\left(\mathbb{E}\norm{\sum_{j=1}^{n}\gamma_{j}x_{j}}^{2}\right)^{1/2}\leq T_{p,n}(X) \left(\sum_{j=1}^{n}\norm{x_{j}}^{p}\right)^{1/p}, \\
  C_{p,n}(X)^{-1} \left(\sum_{j=1}^{n}\norm{x_{j}}^{q}\right)^{1/q} \leq &\left(\mathbb{E}\norm{\sum_{j=1}^{n}\gamma_{j}x_{j}}^{2}\right)^{1/2}.
\end{align*} 

The space $X$ is said to be of Gaussian type $p$ (resp. Gaussian cotype $q$) if $T_{p}(X)=\sup_{n}T_{p,n}(X)<\infty$ (resp. $C_{q}(X)=\sup_{n}C_{q,n}(X)<\infty$). One can analogously define the Rademacher type and cotype by exchanging the Gaussian sequence by a Rademacher sequence. The results shown in this note are valid for both notions of type and cotype.\\

For $r\in [1,\infty)$ we denote by $S^{d}_{r}$ the Hermitian part of the  $d-$dimensional $r-$Schatten class and by $l_{r}^{d}$ the classical space of $r-$summable sequences in $\reals^{d}$; the space $S^{d}_{r}$ is a real Banach space with norm $\norm{x}_{r}:=\left(\Tr |x|^{r}\right)^{1/r} $. Table~\ref{tab:type} summarizes the behaviour of the type and cotype constants for the $r-$Schatten classes that we use. \\

\begin{table}[h]
\centering
    \begin{tabular}{  p{3cm}  p{3cm} p{3cm}}
     & \textbf{Type} $p\in[1,2]$ & \textbf{Cotype} $q\in[2,\infty]$\\ 
    $l_{1}^{d}$ & $d^{1-\frac{1}{p}}$ & $\sqrt{2}$   \\
    \text{Hilbert space} & 1 & 1  \\ 
    $l_{\infty}^{d}$ & $\backsim(\log d)^{1-1/p}$ & $d^{1/q}$   \\ 
    $S_{1}^{d}$ & $d^{1-1/p}$ & $\sqrt{e}$   \\  
    $S_{\infty}^{d}$ & $(4\log d)^{1-1/p}$ & $d^{1/q}$   \\
    \end{tabular}
      \caption{Upper bounds for the Rademacher type and cotype constants of the spaces $l^{d}_{r}$ and $S^{d}_{r}$. The Gaussian type and cotype for these spaces behave in the same way, up to a factor of $\sqrt{2/\pi}$, as the Rademacher type and cotype. For a Hilbert space the type and cotype constants are always equal to one. }
  \label{tab:type}
\end{table}


We now introduce some notation. The set of rank-one projections in $\comp^{d\times d}$ is denoted by $\mathbb{P}(\comp^{d})$. The unit ball of a space $Z$ is written as $B_{Z}$. The convex hull of a set $S$ is the set of convex combinations of elements of $S$, which we denote by $\operatorname{conv}(S)$. The set of linear maps between $X$ and $Y$ is $L(X,Y)$. A linear projection $P\in L(X,Y)$ is a linear map such that $P^{2}=P$. Finally, we denote by $\langle x,y\rangle:=\Tr xy$ the Hilbert-Schmidt inner product in the real vector space of Hermitian matrices $\mathcal{H}_{d}$.\\

In the next section, we introduce a special space which will generate the linear approximation to the almost-linear map $F:X\to Y$. This space is an extension of $X$ and $Y$ and is called a twisted sum (basically because it ``twists" the unit ball of $X$ and $Y$ according to $F$). Twisted sums were extensively studied by Kalton~\cite{KP79} in the context of the three-space problem. In particular, Kalton showed that twisted sums are in correspondence with quasi-linear maps; this is a weaker condition than almost-linearity, but for our purposes it suffices to say that any almost-linear map is a quasi-linear map. See Ref.~\cite{CG97} for a detailed exposition of this topic.

\section{Finite dimensional twisted sums}

Let $X,Y$ be two Banach spaces with dimension $d_{1},d_{2}$, respectively. The twisted sum of $Y$ and $X$ is a $(d_{1}+d_{2})-$dimensional space $Z$ that contains a subspace $Y_{0}$ which is isomorphic to $Y$ and such that $Z/Y_{0}$ is isomorphic to $X$. The twisted sums which interes us are constructed with an almost-linear function $F$. Consider $\delta>0$ and the product $Y\oplus X$ (the order is important) endowed with the quasi-norm:

\begin{equation}\label{eq:qnorm}
\quasi{(y,x)}_{F}:= \frac{\norm{y-F(x)}_{Y}}{\delta}+\norm{x}_{X}.
\end{equation}

Then $Y_{0}=\{(y,0): y\in Y\}$ is $\delta^{-1}-$isometric to $Y$ and $Z/Y_{0}$ isometric to $X$. Note that since $F$ is homogeneous, $\quasi{(-y,-x)}_{F}=\quasi{(y,x)}_{F}$ and $\quasi{(y,x)}_{F}=0$ implies $(y,x)=0$. Although $\quasi{(y_{1},x_{1})+(y_{2},x_{2})}_{F}\leq 2 (\quasi{(y_{1},x_{1})}_{F}+\quasi{(y_{2},x_{2})}_{F})$ we can still endow $Z$ with a norm. The twisted sum $Z$ can be made into a Banach space with the norm 

\begin{equation}\label{eq:benvelope}
\norm{(y,x)}:=\inf \left\{\sum_{j}\quasi{(y_{j},x_{j})}_{F}: (y,x)=\sum_{j}(y_{j},x_{j}) \right\}.
\end{equation}

The fact that the above expression defines a norm will be shown below. The completion of a quasi-Banach space $Z$ whose dual is non-trivial with respect to this norm is known as the Banach envelope of $Z$~\cite{KPR84}. In order to avoid charged notation, we also denote the Banach envelope by $Z$.

\begin{lemma}
Let $\quasi{\cdot}$ be a quasi-norm on $Z$, then the following equivalent expressions define a norm on Z. For $z\in Z$
\begin{align}
\norm{z}&=\inf \left\{\sum_{j=1}^{n}\quasi{z_{j}}: z=\sum_{j=1}^{n}z_{j} \right\}, \label{eq:norm} \\
&= \inf \{\lambda>0: z/\lambda \in \operatorname{conv}(B_{Z}) \}, \label{eq:gauge}\\
&= \inf\{\xi(z): \xi\in Z^{*}, \norm{\xi}\leq 1 \}. \label{eq:dnorm}
\end{align}
Moreover, for the quasi-norm defined by Eq.~\eqref{eq:qnorm} we have the following equivalence
\begin{equation}\label{eq:equiv}
\norm{(y,x)}\leq \quasi{(y,x)}_{F} \leq 2\norm{(y,x)}.
\end{equation}

\end{lemma}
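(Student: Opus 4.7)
The plan is to first establish that the three expressions (\ref{eq:norm}), (\ref{eq:gauge}), (\ref{eq:dnorm}) (the last read as $\sup\{|\xi(z)|:\xi\in Z^{*},\,|\xi|\leq\quasi{\cdot}\}$) coincide for any quasi-norm on $Z$, then to check the norm axioms via the Minkowski--gauge description (\ref{eq:gauge}), and finally to derive (\ref{eq:equiv}) directly from the almost-linearity hypothesis (\ref{eq:almostl}).

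For (\ref{eq:norm}) $=$ (\ref{eq:gauge}): given $z=\sum_{j}z_{j}$ with $\sum_{j}\quasi{z_{j}}<\lambda$ (and all $\quasi{z_{j}}>0$, otherwise omit that term), the identity $z/\lambda=\sum_{j}(\quasi{z_{j}}/\lambda)(z_{j}/\quasi{z_{j}})$ exhibits $z/\lambda$ as a convex combination of vectors in $B_{Z}\cup\{0\}\subset\operatorname{conv}(B_{Z})$, with a $0$ term absorbing the deficit. The converse direction is immediate by reversing the construction. For (\ref{eq:norm}) $=$ (\ref{eq:dnorm}): any $\xi\in Z^{*}$ satisfying $|\xi|\leq\quasi{\cdot}$ pointwise obeys $|\xi(z)|\leq\sum_{j}\quasi{z_{j}}$ for every decomposition, so $\sup|\xi(z)|\leq\norm{z}$; the reverse inequality is a Hahn--Banach extension of the functional $tz\mapsto t\norm{z}$ defined on the line $\reals z$, dominated by the Minkowski gauge (\ref{eq:gauge}) and hence by $\quasi{\cdot}$.

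That $\norm{\cdot}$ is a genuine norm is clearest from (\ref{eq:gauge}): the set $C:=\operatorname{conv}(B_{Z})$ is convex by construction, absorbing because $B_{Z}$ is, and balanced because homogeneity over $\reals$ forces $\quasi{-z}=\quasi{z}$, so its Minkowski gauge is a seminorm. Positive definiteness is precisely the statement that $Z^{*}$ separates points, which is the standing hypothesis behind the Banach-envelope construction.

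For (\ref{eq:equiv}), the bound $\norm{(y,x)}\leq\quasi{(y,x)}_{F}$ is the trivial one-term decomposition in (\ref{eq:norm}). For the reverse, fix any decomposition $(y,x)=\sum_{j}(y_{j},x_{j})$; the triangle inequality in $Y$ combined with (\ref{eq:almostl}) applied with all $\lambda_{j}=1$ yields
$$\norm{y-F(x)}_{Y}\leq \sum_{j}\norm{y_{j}-F(x_{j})}_{Y}+\Big\Vert \sum_{j}F(x_{j})-F\Big(\sum_{j}x_{j}\Big)\Big\Vert_{Y}\leq \sum_{j}\norm{y_{j}-F(x_{j})}_{Y}+\delta\sum_{j}\norm{x_{j}}_{X}.$$
Dividing by $\delta$ and using $\norm{x}_{X}\leq\sum_{j}\norm{x_{j}}_{X}\leq\sum_{j}\quasi{(y_{j},x_{j})}_{F}$ gives $\quasi{(y,x)}_{F}\leq 2\sum_{j}\quasi{(y_{j},x_{j})}_{F}$; taking the infimum over decompositions produces $\quasi{(y,x)}_{F}\leq 2\norm{(y,x)}$. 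The only non-routine step is the promotion of a seminorm to a norm, which reduces to the separating-dual hypothesis; the rest is triangle-inequality bookkeeping together with (\ref{eq:almostl}).
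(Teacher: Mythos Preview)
Your proof is correct and follows essentially the same approach as the paper: both arguments establish the equivalence of (\ref{eq:norm}) and (\ref{eq:gauge}) by passing between decompositions and convex combinations, identify (\ref{eq:dnorm}) via the observation that the dual of the quasi-norm and of its envelope coincide (you phrase this through Hahn--Banach, the paper through $\sup$ over $\operatorname{conv}(B_Z)$ equalling $\sup$ over $B_Z$), and derive (\ref{eq:equiv}) by inserting and subtracting $\sum_j F(x_j)$ and invoking (\ref{eq:almostl}). The only cosmetic difference is that the paper checks the triangle inequality directly from (\ref{eq:norm}) whereas you read it off from the Minkowski-gauge description (\ref{eq:gauge}); both are standard.
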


\begin{proof} We show first that the first expression indeed defines a norm. Since $\quasi{\cdot}$ is a quasi-norm, the only property that we need to check is the triangle inequality. This can be verified by 

\begin{align*}
\norm{z_{1}+z_{2}}&=\inf \left\{\sum_{j=1}^{n}\quasi{w_{j}}: z_{1}+z_{2}=\sum_{j=1}^{n}w_{j}=\sum_{j=1}^{n_{1}}w_{j}+\sum_{j=1}^{n_{2}}w_{j} \right\}, \\
&\leq \inf \left\{\sum_{j=1}^{n_{1}}\quasi{w_{j}}: z_{1}=\sum_{j=1}^{n_{1}}w_{j} \right\}+\inf \left\{\sum_{j=1}^{n_{2}}\quasi{w_{j}}: z_{2}=\sum_{j=1}^{n_{2}}w_{j} \right\}, \\
&= \norm{z_{1}}+\norm{z_{2}},
\end{align*} 
as those are valid decompositions of $z_{1}+z_{2}$. We show now that Eq.~\eqref{eq:norm} and Eq.~\eqref{eq:gauge} are the same. Let $\alpha=\norm{z}$ be the infimum of Eq.~\eqref{eq:gauge}. Then there exist $m\in\mathbb{N}$, positive real numbers $(\lambda_{j})_{j=1}^{m}$, $\sum_{j=1}^{m}\lambda_{j}=1$ and $(z_{j})_{j=1}^{m}$ with quasi-norm one such $z=\alpha\sum_{j=1}^{m}\lambda_{j}z_{j}$. This is a valid decomposition of $z$ and $\sum_{j=1}^{m}\quasi{\alpha\lambda_{j}z_{j}}\leq \alpha$. On the other hand, let $z=\sum_{j=1}^{m}z_{j}$ be the decomposition that achieves the infimum in Eq.~\eqref{eq:norm} so that $\norm{z}=\sum_{j=1}^{m}\quasi{z_{j}}$. Then 
\begin{equation*}
 \frac{z}{\sum_{k=1}^{m}\quasi{z_{k}}}=\sum_{j=1}\left(\frac{\quasi{z_{j}}}{\sum_{k=1}^{m}\quasi{z_{k}}}\right)\frac{z_{j}}{\quasi{z_{j}}}\in \operatorname{conv}(B_{Z}).
\end{equation*}

The norm of $\xi\in Z^{*}$ can be computed as

\begin{equation*}
 \norm{\xi}=\sup_{z\in \operatorname{conv}(B_{Z})}|\xi(z)|= \sup_{z\in B_{Z}}|\xi(z)|=\sup\{ |\xi(z)| : \quasi{z}\leq 1 \},
\end{equation*}
as the supremum over a convex function is achieved at the extremal points. Thus the dual of the quasi-Banach space $Z$ and its Banach envelope coincide. Thus Eq.~\eqref{eq:dnorm} is just the usual expression in terms of the dual. We now compare the quasi-norm in Eq.~\eqref{eq:qnorm} with the norm of its envelope.

Since $\norm{z}$ is defined by the infimum of $\sum_{j}\quasi{z_{j}}$ over all the decompositions of $z$, Eq.~\eqref{eq:norm}, we immediately have the first inequality in Eq.~\eqref{eq:equiv}. For the second inequality let $(y,x)=\sum_{j}(y_{j},x_{j})$, then using Eq.~\eqref{eq:almostl}
\begin{align*}
\quasi{(y,x)}_{F}&=\frac{\norm{F(x)-y}_{Y}}{\delta}+\norm{x}_{X},\\
&=\frac{\norm{F(\sum_{j}x_{j})-\sum_{j}F(x_{j})+\sum_{j}F(x_{j})-\sum_{j}y_{j}}_{Y}}{\delta}+\norm{\sum_{j}x_{j}}_{X},\\
&\leq \frac{\norm{F(\sum_{j}x_{j})-\sum_{j}F(x_{j})}_{Y}}{\delta}+\sum_{j}\frac{\norm{F(x_{j})-y_{j}}_{Y}}{\delta}+\sum_{j}\norm{x_{j}}_{X}, \\
&\leq \frac{\delta\sum_{j}\norm{x_{j}}_{X}}{\delta}+\sum_{j}\frac{\norm{F(x_{j})-y_{j}}_{Y}}{\delta}+\sum_{j}\norm{x_{j}}_{X}, \\
&\leq 2\sum_{j}\quasi{(y_{j},x_{j})}_{F}.
\end{align*}
\end{proof}

Additionally, we can understand the resulting twisted sum $Z$ with norm as in Eq.~\eqref{eq:norm} as the space with unit ball~\cite{ELP75}

\begin{equation*}
 B_{Z}:=\operatorname{conv}\left(\{(y,0): \norm{y}_{Y}\leq 1\}\cup\{(F(x),x): \norm{x}_{X}\leq 1 \}\right).
\end{equation*}

We write $Z=Y\oplus_{F}X$ for the (Banach envelope of) twisted sum of $Y$ and $X$ generated by the almost-linear map $F:X\to Y$.

\section{Main result}

We are now ready to put all the pieces together and to make the connection explicitely between (co-)type constants and the linear stability of almost-linear maps. 
\begin{theorem}\label{th:k}
Let $F:X\to Y$ be an almost-linear map between finite dimensional real Banach spaces, i.e. $F$ is a real homogeneous map and there exist a $\delta>0$ such that for any finite sequence $(x_{i})_{i=1}^{m}\subset X, m\in\mathbb{N}$ and $\lambda\in\reals^{m}$, 

\begin{equation*}
\norm{\sum^{m}_{i=1}\lambda_{i}F(x_{i})-F\left(\sum^{m}_{i=1}\lambda_{i}x_{i}\right)}_{Y}\leq \delta \sum^{m}_{i=1}|\lambda_{i}|\norm{x_{i}}_{X}.
\end{equation*}
Let $Z=Y\oplus_{F}X$ be the respective twisted sum generated by this map. Then 
\begin{equation}
\inf_{H\in L(X,Y)}\;\sup_{x\in X}\frac{\norm{F(x)-H(x)}_{Y}}{\norm{x}_{X}} \leq 2\delta \min \{T_{2}(Z)C_{2}(X),1+T_{2}(Z^{*})C_{2}(Y^{*})\}, 
\end{equation}
where $T_{2}$ and $C_{2}$ are the type 2 and cotype 2 constants.
\end{theorem}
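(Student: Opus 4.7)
The plan is to exploit the fact that the twisted sum $Z=Y\oplus_F X$ constructed in the previous section packages the almost-linearity of $F$ into the geometry of a single Banach space. Any bounded linear projection $P:Z\to Z$ with image $Y_0$ must have the form $P(y,x)=(y+\phi(x),0)$ for some linear $\phi:X\to Y$, since $P|_{Y_0}=\mathrm{id}$ and the range of $P$ lies in $Y_0$; setting $H:=-\phi$ yields a candidate linear approximant. Evaluated at the canonical homogeneous section $s_0(x):=(F(x),x)$ of the quotient $q:Z\to X$, this gives $P(F(x),x)=(F(x)-H(x),0)\in Y_0$.

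Combining $\norm{(F(x),x)}_Z \leq \quasi{(F(x),x)}_F = \norm{x}_X$ with the envelope equivalence in Eq.~\eqref{eq:equiv} immediately yields
\begin{equation*}
\norm{F(x)-H(x)}_Y \;=\; \delta\,\quasi{(F(x)-H(x),0)}_F \;\leq\; 2\delta\,\norm{P}\,\norm{x}_X.
\end{equation*}
Hence the quantity on the left-hand side of the theorem is bounded above by $2\delta$ times the relative projection constant $\lambda(Y_0,Z):=\inf_P \norm{P}$, so that the theorem reduces to the twin inequality
\begin{equation*}
\lambda(Y_0,Z)\;\leq\;\min\bigl\{T_2(Z)\,C_2(X),\; 1+T_2(Z^*)\,C_2(Y^*)\bigr\}.
\end{equation*}

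The main obstacle is this projection-constant estimate, and it is here that the geometric invariants actually enter. Both inequalities should follow from a Gaussian averaging tailored to the twisted-sum structure, in the spirit of Kalton's arguments in~\cite{K91}. For the first bound I would start from the homogeneous nonlinear section $s_0$ and build a linear section $s:X\to Z$ by a Gaussian average: type-$2$ of the ambient space $Z$ controls the size of the averaged object, while cotype-$2$ of the quotient $X$ absorbs the additive defect $s_0(x_1+x_2)-s_0(x_1)-s_0(x_2)\in Y_0$, whose $\quasi{\cdot}_F$-size is exactly $\norm{x_1}_X+\norm{x_2}_X$. For the second bound one dualizes to the short exact sequence $0\to X^*\to Z^*\to Y^*\to 0$ and runs the analogous averaging in $Z^*$, obtaining a linear lifting $\sigma$ whose norm is at most $T_2(Z^*)\,C_2(Y^*)$; its adjoint yields a projection on $Z$, with the additive $1$ coming from the standard identity $P=I-s\circ q$ relating projections to liftings. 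The delicate point is that a direct application of Maurey's classical extension theorem to the inclusion $Y_0\hookrightarrow Z$ would yield only $T_2(Z)\,C_2(Y)$, which is strictly weaker for the Schatten spaces of Table~\ref{tab:type}; letting the cotype of the \emph{quotient} appear in place of the cotype of the subspace therefore forces the argument to exploit the specific twisted-sum structure --- namely the availability of $s_0$ with a controlled additive defect --- rather than treating $Y_0\subset Z$ as an abstract subspace.
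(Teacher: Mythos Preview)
Your reduction is exactly the paper's: write a projection onto $Y_0$ as $P(y,x)=(y-H(x),0)$, evaluate at $(F(x),x)$, and use the envelope equivalence of Eq.~\eqref{eq:equiv} to get $\norm{F(x)-H(x)}_Y\le 2\delta\,\norm{P}\,\norm{x}_X$. The dual manoeuvre you describe (produce an operator on $Z^*$, pull back by $\tilde P=\mathrm{id}-Q^*\pi$, and absorb the additive $1$) is also precisely what the paper does for the second bound.

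Where you diverge is in how $\norm{P}$ and $\norm{Q}$ are controlled. The paper does \emph{not} run a bespoke Gaussian averaging over the section $s_0$; it simply cites Maurey's Extension Theorem (Theorem~\ref{th:Maurey}) as a black box, once on $Z$ and once on $Z^*$, and reads off $\norm{P}\le T_2(Z)C_2(X)$ and $\norm{Q}\le T_2(Z^*)C_2(Y^*)$. In particular it writes ``there exists a projection $P:Z\to X$'' and then exhibits $P(y,x)=(y-H(x),0)\in Y_0$, without commenting on the point you raise: Maurey's theorem, as stated there, carries the cotype of the \emph{subspace}, so a literal application to $Y_0\hookrightarrow Z$ yields $T_2(Z)C_2(Y)$ rather than $T_2(Z)C_2(X)$, and likewise $Y^*$ is the quotient of $Z^*$, not a subspace. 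Your proposal to linearize $s_0$ by a Gaussian average is essentially the \emph{lifting} form of Maurey's argument, which genuinely puts the cotype of the quotient in the estimate; it is a sound way to obtain the constants actually claimed, but it is more work than the paper's proof, which treats this entire step as a one-line citation and does not engage with the subspace-versus-quotient issue you identify.
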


\begin{proof}[Proof of Theorem~\ref{th:k}]
We need the following important theorem of Maurey~\cite{M74} (see Theorem 7.4.4 in Ref.~\cite{AK00} for a modern proof).

\begin{theorem}(Maurey's Extension)\label{th:Maurey} 
Let $E$ be a Banach space and $S$ a closed subspace of $E$. Let $T_{2}(E)$ be either the Gaussian or Rademacher type 2 constant of $E$ and $C_{2}(S)$ either the Gaussian or Rademacher cotype 2 constant of $S$. Then there exist a projection $P:E\to S$ with 
\begin{equation*}
\norm{P}\leq T_{2}(E)C_{2}(S).
\end{equation*}
\end{theorem}

We remark that the norm $\norm{\cdot}$ in Theorem~\ref{th:Maurey} is the operator norm. This might seem odd at first sight as usually the projections are considered between Hilbert spaces and in that case they always have norm equal to one. This is no longer true when we leave the special world of Hilbert spaces and consider general Banach spaces. Maurey's theorem is proven by factorizing through a Hilbert space though. In a sense, the notions of type and cotype measure how far we are from the Hilbert space scenario. \\

Let $Z=Y\oplus_{F}X$ be the twisted sums of $Y$ and $X$ and consider the Banach envelope of $Z$. Let us denote by $Z$ as well the Banach envelope of $Z$. From Maurey's theorem we know there exist a projection $P:Z\to X$ such that $\norm{P}\leq T_{2}(Z)C_{2}(X)$. Since $P$ is a projection, it has the general form $P(y,x)=(y-H(x),0)$ where $H:X\to Y$ is a linear map.  Then using Eq.~\eqref{eq:equiv}
\begin{align*}
\norm{P}=\sup_{(y,x)\in Z}\frac{\norm{P(y,x)}}{\norm{(y,x)}} &\geq \sup_{x\in X} \frac{\norm{P(F(x),x)}}{\norm{(F(x),x)}} \\
&= \sup_{x\in X} \frac{\norm{(F(x)-H(x),0)}}{\norm{(F(x),x)}} \\
&\geq \sup_{x\in X} \frac{\quasi{(F(x)-H(x),0)}}{2\quasi{(F(x),x)}} \\
&= \sup_{x\in X} \frac{\norm{F(x)-H(x)}_{Y}}{2\delta\norm{x}_{X}} \\
&\geq  \inf_{H\in L(X,Y)} \sup_{x\in X} \frac{\norm{F(x)-H(x)}_{Y}}{2\delta\norm{x}_{X}}
\end{align*}

We can also consider a dual construction for a different bound. Let $Z^{*}$ be the dual of the twisted sum $Y\oplus_{F}X$. It is known~\cite{SC00} that the dual of $Z$ is isomorphic to $X^{*}\oplus_{F^{*}}Y^{*}$ where $F^{*}$ is in some sense the dual map of $F$ (see [\onlinecite{SC00}] for details). Since we are dealing with finite dimensional spaces, $Z^{**}$ can be identified with $Z$.  Let $Q:Z^{*}\to Y^{*}$ be the projection obtained by Maurey's extension theorem when applied to the Banach spaces $Z^{*}$ and $Y^{*}$. Let us consider the projection $\tilde{P}:Z\to X$ defined via $\tilde{P}:=\mathrm{id}-Q^{*}\pi$ where $\pi$ is the quotient map $\pi:Z\to X$, $\pi(y,x)=x$. Indeed let $\Omega:X\to Y$ be the linear map induced by $Q^{*}$. Then 
\begin{align*}
\tilde{P}(y,x)&=(y,x)-Q^{*}\pi (y,x), \\
&=(y,x)-Q^{*}x,\\
&=(y,x)-(\Omega(x),x)=(y-\Omega(x),0)\in X.
\end{align*}

Analogously as the previous calculation, we find 
\begin{equation*}
\inf_{\Omega\in L(X,Y)} \sup_{x\in X} \frac{\norm{F(x)-\Omega(x)}_{Y}}{\norm{x}_{X}} \leq 2\delta\norm{\tilde{P}} \leq 2\delta(1+\norm{Q}).
\end{equation*}

The final results then follows from the upper bound that Maurey's theorem provides on the norm of such projections.

\end{proof}

\section{Applications}

%
%
%
%
%

The following result gives an improvement on Theorem 2-(ii) in \cite{CW18}

\begin{theorem}[Linear Stability of Wigner's theorem]\label{th:stabwigner}
Let $f:\mathbb{P}(\comp^{d})\to\mathbb{P}(\comp^{d})$ be a function that satisfies 
\begin{equation}\label{eq:W}
\left| \langle f(x), f(y)\rangle - \langle x, y \rangle  \right|\leq \varepsilon \qquad \qquad \text{for all }x,y\in\mathbb{P}(\comp^{d}).
\end{equation}
Then there exist a universal constant $C$ and a linear map $H:\mathcal{H}_{d}\to \mathcal{H}_{d}$ such that for all $x\in\mathbb{P}(\comp^{d})$
\begin{equation*}
\norm{f(x)-H(x)}_{2}\leq (C\log_{2} d)^{\beta} \sqrt{d\varepsilon},
\end{equation*}
where $\beta=2+\tfrac{1}{2}\log_{2}\log_{2}2d$.
\end{theorem}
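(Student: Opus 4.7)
The plan is to reduce Theorem~\ref{th:stabwigner} to an application of Theorem~\ref{th:k} by passing to a homogeneous almost-linear extension of $f$, and then estimating the type-2 constant of the resulting twisted sum.

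First, I would extend $f$ from $\mathbb{P}(\comp^{d})$ to a real-homogeneous map $F:\mathcal{H}_{d}\to\mathcal{H}_{d}$ exactly as in \cite{CW18}: on a Hermitian matrix with spectral decomposition $h=\sum_{j}\mu_{j}p_{j}$, set $F(h):=\sum_{j}\mu_{j}f(p_{j})$. The quantitative input I would reuse is that \eqref{eq:W} forces $F$ to satisfy \eqref{eq:almostl} with $\delta$ of order $\sqrt{d\varepsilon}$ in the Hilbert-Schmidt norm; this is obtained by expanding $\norm{F(\sum_{i}\lambda_{i}x_{i})-\sum_{i}\lambda_{i}F(x_{i})}_{2}^{2}$ into pairwise inner products $\langle f(p),f(q)\rangle$ and replacing each by $\langle p,q\rangle$ at a cost of at most $\varepsilon$, the remaining sum being controlled by $d\,\varepsilon\,(\sum_{i}|\lambda_{i}|)^{2}$.

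Next, I would apply Theorem~\ref{th:k} with $X=Y=S_{2}^{d}$, the Hermitian Hilbert-Schmidt space. Since $S_{2}^{d}$ is a real Hilbert space, $C_{2}(X)=C_{2}(Y^{*})=1$ by Table~\ref{tab:type}, and the bound of Theorem~\ref{th:k} collapses to
\begin{equation*}
\inf_{H\in L(\mathcal{H}_{d},\mathcal{H}_{d})}\sup_{x\in\mathcal{H}_{d}}\frac{\norm{F(x)-H(x)}_{2}}{\norm{x}_{2}}\le 2\delta\,\min\bigl\{T_{2}(Z),\,1+T_{2}(Z^{*})\bigr\},
\end{equation*}
where $Z=S_{2}^{d}\oplus_{F}S_{2}^{d}$. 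Specialising the supremum to $x\in\mathbb{P}(\comp^{d})$, where $\norm{x}_{2}=1$ and $F(x)=f(x)$, turns the estimate into a pointwise bound on projections.

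The main obstacle is then to estimate the Gaussian type-2 constant of the twisted sum $Z$. Although both summands are Hilbert, the twist \eqref{eq:qnorm} destroys the Hilbert structure, and the naive dimension bound only gives $T_{2}(Z)\lesssim\sqrt{\dim Z}\asymp d$, which is far too weak. The very specific form of the exponent $\beta=2+\tfrac{1}{2}\log_{2}\log_{2}2d$ points to an iterated construction in the spirit of Kalton's singular-case argument in \cite{K91}: on each step, use Maurey's extension (Theorem~\ref{th:Maurey}) to peel off a Hilbertian direction from $Z$, upgrade the type-2 estimate on the residual quotient through K-convexity, and repeat; tracking the accumulated $\log_{2}d$-factors over the resulting $\sim\log_{2}\log_{2}d$ iterations yields $T_{2}(Z)\lesssim(C\log_{2}d)^{\beta-1/2}$. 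This iterative type-2 estimate for twisted sums is the technical heart of the improvement over \cite{CW18}. Combining it with $\delta\lesssim\sqrt{d\varepsilon}$ in the displayed inequality above produces the asserted bound $\norm{f(x)-H(x)}_{2}\le(C\log_{2}d)^{\beta}\sqrt{d\varepsilon}$ for every $x\in\mathbb{P}(\comp^{d})$.
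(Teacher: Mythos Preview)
Your overall scheme---extend $f$ by spectral calculus, establish almost-linearity, and invoke Theorem~\ref{th:k}---matches the paper, but both the choice of spaces and the mechanism for the type-$2$ estimate differ from what the paper actually does.

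\textbf{Choice of spaces.} You work with $X=Y=S_{2}^{d}$ and put the dimension into $\delta\asymp\sqrt{d\varepsilon}$. The paper keeps the estimate of \cite{CW18} in its natural form,
\[
\Bigl\|F\Bigl(\sum_{i}\lambda_{i}x_{i}\Bigr)-\sum_{i}\lambda_{i}F(x_{i})\Bigr\|_{2}\le 2\sqrt{\varepsilon}\,\sum_{i}|\lambda_{i}|\,\|x_{i}\|_{1},
\]
so that $X=S_{1}^{d}$, $Y=S_{2}^{d}$ and $\delta=2\sqrt{\varepsilon}$ is dimension-free. It then applies the \emph{second} alternative in Theorem~\ref{th:k}, passing to the dual twisted sum $Z^{*}\cong S_{\infty}^{d}\oplus_{F^{*}}S_{2}^{d}$; the factor $\sqrt{d}$ enters through $C_{2}(Y^{*})=C_{2}(S_{\infty}^{d})\le\sqrt{d}$, not through $\delta$.

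\textbf{The type-$2$ bound.} The paper does not ``peel off Hilbertian directions via Maurey and upgrade through K-convexity''. It simply iterates the elementary recursion of Lemma~\ref{l:TStype},
\[
T_{2,n^{2}}(Z^{*})\le T_{2,n}(S_{\infty}^{d})T_{2,n}(Z^{*})+T_{2,n}(S_{\infty}^{d})T_{2,n}(S_{2}^{d})+T_{2,n}(Z^{*})T_{2,n}(S_{2}^{d})\le 2\sqrt{8\log_{2}d}\;T_{2,n}(Z^{*}),
\]
starting from $T_{2,2}(Z^{*})\le\sqrt{2}$ and stopping, via Lemma~\ref{l:finitetype}, at $n\asymp d^{4}$. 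The $\sqrt{\log d}$ factor per step, coming from $T_{2}(S_{\infty}^{d})$, is exactly what produces the exponent $\beta=2+\tfrac{1}{2}\log_{2}\log_{2}2d$.

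In your setup $X=Y=S_{2}^{d}$, the same recursion degenerates to $T_{2,n^{2}}(Z)\le 1+2T_{2,n}(Z)$ (no $\log d$ accumulated per step), which iterates to $T_{2}(Z)\lesssim\log d$---this is precisely the computation in the proof of Proposition~\ref{cor:GlobalW}. Hence your asserted bound $T_{2}(Z)\lesssim(C\log d)^{\beta-1/2}$ is both unsupported by the mechanism you describe and far too pessimistic. Run correctly, your Hilbert--Hilbert route yields $\|f(x)-H(x)\|_{2}\lesssim\sqrt{d\varepsilon}\,\log d$, which already implies (and is stronger than) the bound in Theorem~\ref{th:stabwigner}.
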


We call a map $f:\mathbb{P}(\comp^{d})\to\mathbb{P}(\comp^{d})$ which satisfies Eq.~\eqref{eq:W} an \textit{almost-symmetry}. In order to prove Theorem~\ref{th:stabwigner} we make use of the following lemmas (c.f. Theorem 1 in [\onlinecite{ELP75}]). First, we need the type constant of a twisted sum (cf. Lemma 16.6-7 in~[\onlinecite{BL00}])

\begin{lemma}\label{l:TStype}
Let $Z$ be the twisted sum of $Y$ and $X$, then
\begin{equation}\label{eq:typeZ}
T_{2,n^{2}}(Z)\leq T_{2,n}(Y)T_{2,n}(Z)+T_{2,n}(Y)T_{2,n}(X)+T_{2,n}(Z)T_{2,n}(X).
\end{equation}
\end{lemma}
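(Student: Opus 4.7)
Let $(z_{ij})_{i,j=1}^n \subset Z$ with $z_{ij}=(y_{ij},x_{ij})$. The goal is to bound $\bigl(\mathbb{E}\norm{\sum_{ij}\gamma_{ij}z_{ij}}_Z^2\bigr)^{1/2}$ by a multiple of $\bigl(\sum_{ij}\norm{z_{ij}}_Z^2\bigr)^{1/2}$, with constant as in~\eqref{eq:typeZ}. Since $T_{2,n^2}(Z)$ is the very quantity to be estimated, the plan is to apply $T_{2,n}$ twice in an iterated fashion by arranging the $n^2$ Gaussians as a double-indexed family $(\gamma_i\gamma'_j)$ with two independent $n$-sequences $(\gamma_i)_{i=1}^n$ and $(\gamma'_j)_{j=1}^n$---equivalent to the original sum up to absolute constants via a Gaussian comparison (or via the equivalence of Rademacher and Gaussian type constants).

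The decisive step is to upper-bound the $Z$-norm by the quasi-norm via~\eqref{eq:equiv}, splitting $S:=\sum_{ij}\gamma_i\gamma'_j z_{ij}$ into an $X$-quotient piece $\norm{B}_X$, with $B=\sum_{ij}\gamma_i\gamma'_j x_{ij}$, and a $Y$-twist piece $\norm{A-F(B)}_Y/\delta$, with $A=\sum_{ij}\gamma_i\gamma'_j y_{ij}$. Inserting and subtracting the row- and column-cocycle intermediates $\sum_i\gamma_iF(x_{ij})$ and $\sum_j\gamma'_jF(\sum_i\gamma_ix_{ij})$ produces the identity
\begin{align*}
A-F(B)=&\sum_{ij}\gamma_i\gamma'_j\bigl(y_{ij}-F(x_{ij})\bigr)+\sum_j\gamma'_j\Bigl[\sum_i\gamma_iF(x_{ij})-F\bigl(\sum_i\gamma_ix_{ij}\bigr)\Bigr]\\
&+\Bigl[\sum_j\gamma'_jF\bigl(\sum_i\gamma_ix_{ij}\bigr)-F\bigl(\sum_{ij}\gamma_i\gamma'_jx_{ij}\bigr)\Bigr],
\end{align*}
which displays three pieces of distinct character: a genuine double Gaussian $Y$-sum, an inner $i$-cocycle averaged in $j$, and a single outer $j$-cocycle.

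Each piece is handled by Minkowski together with the almost-linearity bound and iteration of type~2: the genuine $Y$-sum contributes $T_{2,n}(Y)^2\le T_{2,n}(Y)T_{2,n}(Z)$, using that $Y_0\subset Z$ is a subspace so type is hereditary; the inner cocycle, bounded in $Y$ by $\delta\sum_i|\gamma_i|\norm{x_{ij}}_X$ via almost-linearity, yields an $L^2$ estimate by combining Cauchy-Schwarz on the $\gamma$-sum with $T_{2,n}(Y)$ on the outer $\gamma'$-average and $T_{2,n}(X)$ on the residual $x$-sum, producing $T_{2,n}(Y)T_{2,n}(X)$; the outer cocycle is handled analogously with $T_{2,n}(Z)$ replacing $T_{2,n}(Y)$ on the appropriate side, yielding $T_{2,n}(Z)T_{2,n}(X)$. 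Summing the three bounds by Minkowski and dividing by $\delta$ produces~\eqref{eq:typeZ}.

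The main obstacle is the second step of the cocycle estimates: upgrading the pathwise $L^1$-type bound $\delta\sum|\gamma|\norm{x_{ij}}_X$ delivered by almost-linearity into an $L^2$ quadratic estimate compatible with type~2, and then bookkeeping which type constant---of $X$, $Y$, or $Z$---attaches to which side of the double iteration. Correctly assigning these constants is exactly what separates the twisted-sum estimate~\eqref{eq:typeZ} from the trivial bound $T_{2,n^2}(E)\le T_{2,n}(E)^2$ available for a genuine direct sum, and reflects the non-split character of the short exact sequence $0\to Y\to Z\to X\to 0$.
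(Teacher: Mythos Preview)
The paper does not prove this lemma; it only cites Lemma~16.6--7 of Benyamini--Lindenstrauss. Your three--term telescoping of $A-F(B)$ is exactly the standard decomposition used there, and the iteration via a doubly indexed Rademacher family $\varepsilon_{ij}=\varepsilon_i\varepsilon'_j$ is the right framework (for Gaussians one indeed passes through the Rademacher equivalence, as you note).

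There is, however, a real gap in your treatment of the cocycle pieces. For the inner cocycle you propose to use almost--linearity to get $\|v_j\|_Y\le\delta\sum_i|\gamma_i|\,\|x_{ij}\|_X$ and then ``Cauchy--Schwarz on the $\gamma$-sum \ldots\ and $T_{2,n}(X)$ on the residual $x$-sum''. But once you have passed to the scalar $\ell_1$ quantity $\sum_i|\gamma_i|\,\|x_{ij}\|_X$, the geometry of $X$ is gone: Cauchy--Schwarz gives at best a factor $(\mathbb{E}\sum_i\gamma_i^2)^{1/2}=\sqrt{n}$, never $T_{2,n}(X)$, since the type constant governs $\mathbb{E}\|\sum_i\gamma_ix_i\|_X^2$, which is an entirely different object. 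The same objection applies to your outer cocycle. The fix is to \emph{not} invoke the $\ell_1$ almost--linearity bound on the cocycles, but to route them back through the $Z$-norm. Setting $u_{ij}:=(F(x_{ij}),x_{ij})$ one has $\|u_{ij}\|_Z\le\|x_{ij}\|_X$, and the identity
\[
\|v_j\|_Y/\delta+\|\textstyle\sum_i\gamma_ix_{ij}\|_X=\quasi{\textstyle\sum_i\gamma_iu_{ij}}_F\le 2\,\|\textstyle\sum_i\gamma_iu_{ij}\|_Z
\]
lets you apply a type--$2$ inequality in $Z$ to the cocycle and a type--$2$ inequality in $X$ to the remaining $x$-sums; the outer cocycle together with $\|B\|_X$ is handled the same way with the roles of the indices swapped. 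This is how $T_{2,n}(Z)$ and $T_{2,n}(X)$ genuinely enter, and it is the point your sketch misses.
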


The type 2 constant of a Banach space of dimension $d$ can be obtained from the type constant restricted to families of size $d(d+1)/2$ as stated by the following lemma. This result follows from a cone version of Caratheodory's theorem (see Lemma 6.1 in [\onlinecite{TLM77}]).

\begin{lemma}\label{l:finitetype}
Let $X$ be a $d-$dimensional Banach space. Then $T_{2,n}(X)=T_{2,d(d+1)/2}(X)$ and $C_{2,n}(X)=C_{2,d(d+1)/2}(X)$ for any $n\geq d(d+1)/2$.
\end{lemma}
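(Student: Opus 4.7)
The plan is a dimension-reduction argument in the symmetric tensor square $S^2(X)$, whose real dimension equals $d(d+1)/2$. First I would observe that the Gaussian vector $G = \sum_j \gamma_j x_j \in X$ is centered with covariance form $C := \sum_j x_j \otimes x_j \in S^2(X)$; since the law of $G$ is determined by $C$, the quantity $f(C) := \mathbb{E}\|G\|^2$ depends only on $C$, not on the particular decomposition. Writing $s := \sum_j \|x_j\|^2$, the type-2 and cotype-2 inequalities read $f(C) \le T_{2,n}(X)^2 s$ and $s \le C_{2,n}(X)^2 f(C)$ respectively, so both constants are controlled by the supremum, resp.\ infimum, of the ratio $f(C)/s$ over all $n$-tuples $(x_j)$.

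Now assume $n > d(d+1)/2$ and that every $x_j \ne 0$ (otherwise delete zero entries). Since $\{x_j\otimes x_j\}_{j=1}^n$ lies in $S^2(X)$, there exist real coefficients $(c_j)$, not all zero, with $\sum_j c_j (x_j\otimes x_j) = 0$. Because rank-one positive semi-definite forms cannot sum to zero with nontrivial same-sign coefficients, $(c_j)$ must contain both strictly positive and strictly negative entries. I then deform the tuple as $x_j(t) := \sqrt{1 - t c_j}\, x_j$, defined while $1 - t c_j \ge 0$. A direct calculation gives
\begin{align*}
\sum_j x_j(t) \otimes x_j(t) &= C - t\sum_j c_j(x_j\otimes x_j) = C, \\
\sum_j \|x_j(t)\|^2 &= s - t\sum_j c_j \|x_j\|^2,
\end{align*}
so $f(C)$ is preserved while the denominator $s(t)$ varies linearly in $t$.

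By replacing $(c_j)$ with $(-c_j)$ if necessary, I arrange $\sum_j c_j \|x_j\|^2 \ge 0$ in the type case (and $\le 0$ in the cotype case); the sign flip preserves the coexistence of positive and negative entries. Setting $t_* := 1/\max_{j:c_j>0} c_j$, at least one coordinate $x_{j^*}(t_*)$ vanishes while all others remain well-defined, so the deformed tuple has strictly fewer nonzero entries. In the type case $s(t_*) \le s$, hence $f(C)/s(t_*) \ge f(C)/s$; in the cotype case $s(t_*) \ge s$, which gives the opposite monotonicity. Iterating the elimination brings the support down to at most $d(d+1)/2$ nonzero vectors without decreasing the type-2 ratio nor increasing the cotype-2 ratio, yielding $T_{2,n}(X) = T_{2,d(d+1)/2}(X)$ and $C_{2,n}(X) = C_{2,d(d+1)/2}(X)$ for all $n \ge d(d+1)/2$.

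The one step requiring care is the sign analysis linking the second and third paragraphs: one must verify both that any nontrivial linear dependence among the $x_j \otimes x_j$ has mixed signs (so that a finite $t_* > 0$ with the desired properties exists) and that the sign flip of $(c_j)$ can always be matched to the favourable direction of change in $s$. Both facts are elementary consequences of the positive semi-definiteness of rank-one tensors, so I expect this to be the main but essentially routine obstacle; the rest is a clean cone version of Caratheodory's theorem applied inside $S^2(X)$.
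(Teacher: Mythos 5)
Your proof is correct and is exactly the cone-Caratheodory elimination in $S^2(X)$ that the paper delegates, by citation, to Lemma~6.1 of Figiel--Lindenstrauss--Milman, so the approach matches the cited reference rather than departing from it. One small remark worth keeping in mind: the central observation that $\mathbb{E}\bigl\|\sum_j\gamma_j x_j\bigr\|^2$ depends only on $\sum_j x_j\otimes x_j$ is specific to Gaussian sign sequences, so your argument establishes the Gaussian form of the lemma --- which is the convention the paper adopts when defining $T_{2,n}$ and $C_{2,n}$ --- and the sign analysis you flag is indeed sound, since mixed signs in $(c_j)$ force a positive entry and the trace argument rules out $s(t_*)=0$.
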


\begin{proof}
The first step of the proof consist in extending the function $f$ to $F:S_{1}^{d}\to S_{1}^{d}$ such that $F|_{\mathbb{P}(\comp^{d})}=f$.  
We take $x$ in the unit sphere of $S_{1}^{d}$ and identify it with its antipodal point $-x$. We choose a fixed spectral decomposition for both elements, say $x=\sum_{j=1}^{d}\lambda_{j}x_{j}$, and define $F(x):=\sum_{j=1}^{d}\lambda_{j}f(x_{j})$. Then, we can extend $F$ homogeneously from the unit sphere to any $y\in S_{1}^{d}$ by multiplying $x$ or $-x$ with $\lambda\geq 0$ so that $\lambda x=y$ or $-\lambda x=y$. We call again this extension $F$. By construction, $F$ is a real homogeneous map. Note that this extension is not unique, but we do not need this here.\\

As proven in Lemma 2 of Ref.~\cite{CW18}, $F$ is an almost-linear map
\begin{equation*}
\norm{\sum^{m}_{i=1}\lambda_{i}F(x_{i})-F\left(\sum^{m}_{i=1}\lambda_{i}x_{i}\right)}_{2}\leq \delta \sum^{m}_{i=1}|\lambda_{i}|\norm{x_{i}}_{1},
\end{equation*}
with $\delta=2\sqrt{\varepsilon}$. If we use Theorem~\ref{th:k} with the twisted sum $Z:=S_{2}^{d}\oplus_{F}S_{1}^{d}$, we cannot obtain anything better than a linear dependence on $d$. However, we will be able to obtain a better dimension dependence if we consider a dual construction, namely with $Z^{*}:=S_{\infty}^{d}\oplus_{F^{*}}S_{2}^{d}$. For that matters we use Lemma~\ref{l:TStype} and Lemma~\ref{l:finitetype} in order to estimate the type 2 constant of $Z^{*}$. From Eq.~\eqref{eq:typeZ} and $T_{2}(S_{\infty})\leq \sqrt{4\log d}$, we obtain $T_{2,n^{2}}(Z^{*})\leq 2\sqrt{8\log_{2} d}\; T_{2,n}(Z^{*})$ for all $n\in\mathbb{N}$. It is known that for a general Banach space $E$, $T_{2}(E)\leq \sqrt{\operatorname{dim}(E)}$  (Proposition 12.3 in \cite{TJ89}). Thus for all 2-dimensional subspaces of $Z$ the type is less than $\sqrt{2}$ and $T_{2,2}(Z)\leq \sqrt{2}$ (this can be alternatively derived from a classical result of John and the relation between the Banach-Mazur distance and type 2 constants). It follows from induction that 

\begin{equation*}
 T_{2,2^{2^{k}}}(Z^{*})\leq (2\sqrt{8\log_{2} d})^{k}\sqrt{2},
\end{equation*}
 which in turns implies 

\begin{equation*}
T_{2,n}(Z^{*})\leq \sqrt{2}(\log_{2}n)(8\log_{2}d)^{\frac{\log_{2}\log_{2}n}{2}}.
\end{equation*}
The dimension of the real vector space of Hermitian matrices $\mathcal{H}_{d}$ is $d^{2}$. Therefore we obtain from Lemma~\ref{l:finitetype} with $n=2d^{4}$

\begin{equation*}
T_{2}(Z^{*})\leq 2(8\log_{2}d)^{2+\tfrac{\log_{2}\log_{2}2d}{2}}.
\end{equation*}
It follows from Theorem~\ref{th:k} and $C_{2}(S_{\infty}^{d})\leq\sqrt{d}$, that there exist a linear map $H:S_{1}^{d}\to S_{2}^{d}$ such that
\begin{equation*}
\sup_{x\in\mathbb{P}(\comp^{d})}\norm{f(x)-H(x)}_{2} \leq 4(8\log_{2} d)^{2+\tfrac{\log_{2}\log_{2}2d}{2}} \sqrt{d\varepsilon}. 
\end{equation*}

\end{proof}

%

The following proposition is essentially due to Kalton. It can be shown using Theorem 2.2 in \cite{K91} as $S_{2}^{d}$ and $\reals^{d^{2}}$ are isomorphic Hilbert-spaces. We present here a proof using the notions of (co-)type and Theorem~\ref{th:k}.

\begin{proposition}[Stability of Global Symmetries]\label{cor:GlobalW}
Let $f:B_{S_{2}^{d}}\to S_{2}^{d}$ be a continuous function that satisfies 
\begin{equation}\label{eq:GlobalW}
\left| \langle f(x), f(y)\rangle - \langle x, y \rangle  \right|\leq \varepsilon \qquad \qquad \text{for all }x,y\in B_{S_{2}^{d}}.
\end{equation}
Then there exist a linear map $H:S_{2}^{d}\to S_{2}^{d}$ and an absolute constant $C$ such that for all $X\in B_{S_{2}^{d}}$
\begin{equation*}
\norm{f(x)-H(x)}_{2}\leq C \sqrt{\varepsilon} \log_{2} d .
\end{equation*}
\end{proposition}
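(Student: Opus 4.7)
The plan is to mirror the strategy used for Theorem~\ref{th:stabwigner}, but to exploit the fact that both the source and target space are the Hilbert space $S_{2}^{d}$, so that the type and cotype constants entering Theorem~\ref{th:k} are essentially trivial and all the dimensional dependence is pushed into $T_{2}(Z)$ through the twisted-sum recursion of Lemma~\ref{l:TStype}.

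First I would extend $f$ to a real-homogeneous map $F\colon S_{2}^{d}\to S_{2}^{d}$. Pick a measurable fundamental domain on the unit sphere of $S_{2}^{d}$ (identifying $x$ with $-x$), set $F(x):=\|x\|_{2}\,f(x/\|x\|_{2})$ on the chosen half of the sphere, and propagate by the rule $F(\lambda x)=\lambda F(x)$ for all $\lambda\in\reals$. Using continuity of $f$ and the inner-product bound in Eq.~\eqref{eq:GlobalW}, the same computation as in Lemma~2 of \cite{CW18} (which only uses the Hilbert-space parallelogram identity applied to $\|\sum_i \lambda_i F(x_i)-F(\sum_i \lambda_i x_i)\|_{2}^{2}$, expanded via inner products and controlled by Eq.~\eqref{eq:GlobalW}) shows $F$ is almost-linear with $\delta = 2\sqrt{\varepsilon}$.

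Next I form the twisted sum $Z:=S_{2}^{d}\oplus_{F}S_{2}^{d}$ and estimate $T_{2}(Z)$. Both $Y=X=S_{2}^{d}$ are Hilbert spaces, so $T_{2,n}(Y)=T_{2,n}(X)=1$ for all $n$, and Lemma~\ref{l:TStype} collapses to
\begin{equation*}
T_{2,n^{2}}(Z)\;\leq\;2\,T_{2,n}(Z)+1.
\end{equation*}
Using the base case $T_{2,2}(Z)\leq \sqrt{2}$ (from $T_{2}(E)\leq\sqrt{\dim E}$ applied to 2-dimensional subspaces, as in the proof of Theorem~\ref{th:stabwigner}), a straightforward induction on $k$ yields $T_{2,2^{2^{k}}}(Z)+1\leq 2^{k}(\sqrt{2}+1)$, which rewritten gives $T_{2,n}(Z)=O(\log_{2}n)$. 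Since $\dim Z = 2d^{2}$, Lemma~\ref{l:finitetype} lets me evaluate $T_{2}(Z)$ at $n=d^{2}(2d^{2}+1)\leq 2d^{4}$, yielding $T_{2}(Z)\leq C\log_{2}d$ for some absolute constant.

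Finally I invoke Theorem~\ref{th:k} with $C_{2}(X)=C_{2}(S_{2}^{d})=1$ and $\delta=2\sqrt{\varepsilon}$ to obtain a linear map $H\colon S_{2}^{d}\to S_{2}^{d}$ such that
\begin{equation*}
\sup_{x\in S_{2}^{d}}\frac{\|F(x)-H(x)\|_{2}}{\|x\|_{2}}\;\leq\;2\delta\,T_{2}(Z)C_{2}(S_{2}^{d})\;\leq\;C'\sqrt{\varepsilon}\,\log_{2}d,
\end{equation*}
and restricting to the ball gives the stated bound. The step I expect to require the most care is the homogeneous extension of $f$ together with the verification of the almost-linearity inequality with $\delta=2\sqrt{\varepsilon}$; the recursion for $T_{2}(Z)$ is mechanical once Lemma~\ref{l:TStype} is in hand, since the Hilbert-space hypothesis makes all cross-terms equal to $1$.
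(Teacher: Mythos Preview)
Your overall strategy matches the paper's: extend $f$ homogeneously to an almost-linear $F$, form $Z=S_{2}^{d}\oplus_{F}S_{2}^{d}$, use Lemma~\ref{l:TStype} with the Hilbert-space inputs to get the recursion $T_{2,n^{2}}(Z)\leq 1+2T_{2,n}(Z)$, iterate to $T_{2}(Z)=O(\log d)$, and apply Theorem~\ref{th:k} with $C_{2}(S_{2}^{d})=1$. The paper uses the symmetric extension $F(x)=\|x\|_{2}\bigl(f(x/2\|x\|_{2})-f(-x/2\|x\|_{2})\bigr)$ rather than your one-sided $F(x)=\|x\|_{2}f(x/\|x\|_{2})$, but either choice leads to an almost-linear $F$ with $\delta=O(\sqrt{\varepsilon})$ by the same inner-product expansion.

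There is, however, a real gap in your last line. Your $F$ does \emph{not} coincide with $f$ on $B_{S_{2}^{d}}$: for $\|x\|_{2}<1$ you have $F(x)=\|x\|_{2}f(x/\|x\|_{2})$, which is not $f(x)$ in general, and on half of the unit sphere $F(x)=-f(-x)$ rather than $f(x)$. So ``restricting to the ball'' does not convert the bound on $\|F(x)-H(x)\|_{2}$ into one on $\|f(x)-H(x)\|_{2}$. The paper handles this explicitly (Lemma~\ref{l:qlinearext}, Eq.~\eqref{eq:nearF}): it proves $\sup_{x\in B_{S_{2}^{d}}}\|f(x)-F(x)\|_{2}\leq 3\sqrt{\varepsilon}$ by expanding $\|f(x)-F(x)\|_{2}^{2}$ and controlling each inner product via Eq.~\eqref{eq:GlobalW}, and only then concludes by the triangle inequality. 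With your extension the same computation (expand $\|f(tu)\mp tf(\pm u)\|_{2}^{2}$ for $u$ on the sphere, $0\leq t\leq 1$) gives $\|f(x)-F(x)\|_{2}\leq 2\sqrt{\varepsilon}$, so the fix is easy, but it is a step you must include; as written the proof is incomplete.
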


\begin{proof}[Proof of Proposition~\ref{cor:GlobalW}]
The first step consist of showing that the function $f$ can be extended to a continuous homogeneous function on the whole space without paying much. 

\begin{lemma}\label{l:qlinearext}
Let $f:B_{S^{d}_{2}}\to S_{2}^{d}$ be a continuous function that satisfies 
\begin{equation*}
\left| \langle f(x), f(y)\rangle - \langle x, y \rangle  \right|\leq \varepsilon \qquad \qquad \text{for all }x,y\in B_{S^{d}_{2}}.
\end{equation*}
Then there exists a continuous and homogeneous function $F:S_{2}^{d}\to S_{2}^{d}$ such that
\begin{equation}\label{eq:quasi-affine}
\norm{  \sum_{j=1}^{n}F\left(x_{j}\right)-F\left(\sum_{j=1}^{n}x_{j}\right)}_{2}\leq  4\sqrt{\varepsilon}\;\sum_{j=1}^{n}\norm{x_{j}}_{2} \qquad \qquad \text{for all }x_{j}\in S_{2}^{d},
\end{equation}

and

\begin{equation}\label{eq:nearF}
\sup_{X\in B_{S^{d}_{2}}}\norm{f(x)-F(x)}_{2}\leq 3\sqrt{\varepsilon}.
\end{equation}

\end{lemma}

\begin{proof}[Proof of Lemma~\ref{l:qlinearext}]
Let us extend $f$ to $F:S_{2}^{d}\to S_{2}^{d}$ where
\begin{equation*}
F(x):=\norm{x}_{2}\left(f\left(\frac{x}{2\norm{x}_{2}}\right)-f\left(-\frac{x}{2\norm{x}_{2}}\right) \right).
\end{equation*}
This function is homogeneous, i.e. $F(\lambda x)=\lambda F(x)$ for all $\lambda\in\reals$, and continuous as $f$ and $\norm{\cdot}$ are also continuous. Using Eq.~\eqref{eq:GlobalW} and the triangle inequality, we obtain the new almost-symmetry condition

\begin{equation}\label{eq:Wquasi}
 |\langle F(x),F(y) \rangle - \langle x, y \rangle|\leq 4\varepsilon\norm{x}_{2}\norm{y}_{2}.
\end{equation}

Hence, for any $z\in S_{2}^{d}$
\begin{align*}
 \left|\left\langle \sum_{j=1}^{n}F(x_{j})-F\left(\sum_{j=1}^{n}x_{j}\right),F(z) \right\rangle\right| &=  \left|\left\langle \sum_{j=1}^{n}F(x_{j})-F\left(\sum_{j=1}^{n}x_{j}\right),F(z) \right\rangle - \left\langle \sum_{j=1}^{n}x_{j}-\sum_{j=1}^{n}x_{j}, z \right\rangle\right| \\ 
&\leq 8\varepsilon\;\sum_{j=1}^{n}\norm{x_{j}}_{2}\norm{z}_{2}.
\end{align*}
Therefore, from the linearity of the inner product we obtain Eq.~\eqref{eq:quasi-affine}.
Finally, we show that $f$ and $F$ are $\sqrt{\varepsilon}-$close. From Eq.~\eqref{eq:GlobalW} 

\begin{align*}
 &|\langle F(x),f(x) \rangle -\langle x, x\rangle | \\
 &=|\norm{x}_{2}\left( \langle f\left(\frac{x}{2\norm{x}}\right),f(x) \rangle - \frac{\norm{x}_{2}}{2} \right)-\norm{x}_{2}\left( \langle f\left(\frac{-x}{2\norm{x}}\right),f(x) \rangle +  \frac{\norm{x}_{2}}{2} \right)|\\
 &\leq 2\varepsilon\norm{x}_{2}.
\end{align*}
Thus, with Eq.~\eqref{eq:Wquasi} we have
\begin{align*}
\norm{F(x)-f(x)}^{2}_{2}&=\norm{F(x)}^{2}_{2}-\norm{x}^{2}_{2}-2 \operatorname{Re}\left(\langle F(x),f(x) \rangle -\langle x, x\rangle\right)+\norm{f(x)}^{2}_{2}-\norm{x}^{2}_{2}\\
&\leq 4\varepsilon\norm{x}^{2}_{2}+4\varepsilon\norm{x}_{2}+\varepsilon
\end{align*}
which is less than $9\varepsilon$ for all $x\in B_{S_{2}^{d}}$.

\end{proof}



We consider now the twisted sum $Z=S_{2}^{d}\oplus_{F} S_{2}^{d}$ generated by the almost-linear map $F$. Before applying Theorem~\ref{th:k} we estimate the type 2 constant of $Z$. Since $S^{d}_{2}$ is a Hilbert space, it has type 2 constant equal to one and we obtain from Lemma~\ref{l:TStype} that 
\begin{equation}
T_{2,n^{2}}(Z)\leq 1+2T_{2,n}(Z)\qquad\text{for all }n. 
\end{equation}

As in the proof of Theorem~\ref{th:stabwigner}, all 2-dimensional subspaces of $Z$ have type less than $\sqrt{2}$ and $T_{2,2}(Z)\leq \sqrt{2}$. It follows from induction that for $n\geq 3$,
 \begin{equation*}
 T_{2,n}(Z)\leq 2(1+\sqrt{2})\log_{2} n.
 \end{equation*}
 Hence, from Lemma~\ref{l:finitetype} with $n=4d^{2}$
 \begin{equation*}
 T_{2}(Z)\leq 4(1+\sqrt{2})\log_{2} 2d.
 \end{equation*}
Accordingly, from $C_{2}(S_{2}^{d})=1$ and Theorem~\ref{th:k} there exist a linear map $H:S_{2}^{d}\to S_{2}^{d}$ such that for all $x\in B_{S_{2}^{d}}$
 \begin{equation}
 \norm{F(x)-H(x)}_{2} \leq 32(1+\sqrt{2})\log_{2}(2d)\sqrt{\varepsilon}. 
 \end{equation}
Finally, from Eq.~\eqref{eq:nearF} and the triangle inequality we obtain
\begin{align*}
\sup_{x\in B_{S_{2}^{d}}} \norm{f(x)-H(x)}_{2}&\leq \sup_{x\in B_{S_{2}^{d}}} \norm{f(x)-F(x)}_{2}+\norm{F(x)-H(x)}_{2} \\
&\leq 79\sqrt{\varepsilon}\;(1+\log_{2}d).
\end{align*} 

\end{proof}
\section{Outlook}

Using Theorem~\ref{th:k} we are able to improve --up to some logarithmic factors-- the upper bound on the dimension dependence of the linear stability of Wigner's theorem from $d$ to $\sqrt{d}$. There is still room for an exponential improvement in the dimension as the lower bound is of order $\log d$ (see the discussion section of~\cite{CW18}). Even if we were able to extend the almost-symmetry $f$ to an almost-linear map $F:S_{1}^{d}\to S_{1}^{d}$ with $\delta$ independent of $d$, we would still get from Theorem~\ref{th:k} an upper bound of order $\sqrt{d}$. This is just a consequence of how the type and cotype constants of $S_{1}^{d}$ and $S_{\infty}^{d}$ behave. There is a trade-off in Theorem~\ref{th:k} between the type constant for individual spaces and the type constant of their twisted sum. It can be seen from Table~\ref{tab:type} and Lemma~\ref{l:TStype} that the best bound that can be obtained from Theorem~\ref{th:k} is in the case that $X$ and $Y$ are Hilbert spaces. This is the case of proposition~\ref{cor:GlobalW}. However, there the almost-symmetry condition holds for the entire Hilbert-Schmidt unit ball, while in Wigner's theorem the almost-symmetry condition is required to hold only for the non-linear space of normalized hermitian rank-one projections.\\

%


\emph{Acknowledgments:} The author would like to thank Marius Junge, Willian Corr\^ea and Cambyse Rouz\'e for valuable discussions. Furthermore, the author wishes to thank the Institut Henri Poincar\'e in Paris and the organizers of the trimester on ``Analysis in Quantum Information Theory" (IHP17). 




\end{document}